\newtheorem{theorem}{Theorem}
\newtheorem{lemma}[theorem]{Lemma}
\newtheorem{corollary}[theorem]{Corollary}
\newtheorem{defi}[theorem]{Definition}
\newtheorem{obs}[theorem]{Observation}
\newproof{proof}{Proof}
\newproof{proofofclm}{Proof of Claim}
\newtheorem{note}[theorem]{Note}
\newtheorem{fact}[theorem]{Fact}
\begin{document}

\begin{frontmatter}

\pagestyle{headings}
%\title{Testing whether Uniform Distribution is a Stationary Distribution}
\title{Testing whether the Uniform Distribution is a Stationary Distribution}
\author[label1]{ Sourav Chakraborty}
\author[label1]{ Akshay Kamath}
%\author[label1] { Rameshwar Pratap\corref{mycorrespondingauthor}}
%\author[label1]{ Rameshwar Pratap\footnotemark[3]}
\author[label1]{ Rameshwar Pratap\footnote{{Corresponding author:}
\newline \emph{This work done when the author was pursuing his PhD from Chennai Mathematical Institute.}
\newline Present/corresponding address: 122/7 PAC Colony, Naini, Allahabad, UP, India.
\newline Contact Number: +91 9953842289
}}

%\footnotetext[0]{Corresponding author: }
%\cortext[mycorrespondingauthor]{Corresponding author}
%\address[mysecondaryaddress]{122/7 PAC Colony, Naini, Allahabad, UP, India}
%\inst{1} \and          Akshay Kamath\inst{1} \and          Rameshwar Pratap\inst{1}  }

%\address[mymainaddress]{1600 John F Kennedy Boulevard, Philadelphia}
%\address[mysecondaryaddress]{360 Park Avenue South, New York}

\address[label1]{
Chennai Mathematical Institute,\\
%H1, SIPCOT IT Park, Siruseri\\
%Kelambakkam 603103, India.\\
Chennai, India.\\
e-mail:\{sourav,adkamath,rameshwar\}@cmi.ac.in
%\ead{\{sourav,adkamath,rameshwar\}@cmi.ac.in}
}

%\maketitle

\begin{abstract}
\begin{comment}
A random walk on a directed graph gives a Markov chain on the
vertices of the graph. An important question that arises often in the context of
Markov chain is whether the uniform distribution on the vertices of the graph is a
stationary distribution of the Markov chain. Stationary distribution of a Markov
chain is a  global property of the graph. 
\end{comment}

A random walk on a directed graph generates a Markov chain on the vertices of the graph. 
An important question that  often arises in the context of  Markov chains is, whether 
the uniform distribution on the vertices of the graph is a stationary distribution. 
A stationary distribution of a Markov chain is a global property of the graph. This 
leads to the belief that whether a particular distribution is a stationary 
distribution of a Markov chain depends on the global property of that Markov chain. 
In this paper for a directed graph whose underlying
undirected graph is regular, we prove that whether the uniform distribution on the vertices of the
graph is a stationary distribution, depends on a local property of the graph, namely if $(u,v)$ is a
directed edge, then out-degree(u) is equal to in-degree(v).

%In this paper, we prove that for a regular directed graph whether the uniform 
%distribution on the vertices of the graph is a stationary distribution, depends on 
%a local property of the graph, namely if (u,v) is a directed edge, then  outdegree(u) is equal to indegree(v).

\hspace{15pt} This result also has an application to the problem of testing whether 
a given distribution is uniform or ``far'' from being uniform. %This types of problems are well studied in the area of property
%testing. 
If the distribution is the stationary distribution of the
lazy random walk on a directed graph and the graph is given as an input, then
how many bits (orientations) of the input graph does one need to query in order to decide whether the distribution
is uniform or ``far''\footnote{Here, farness does not imply any statistical distance between the distributions. 
Rather, it specifies  the distance between the orientations - i.e. the minimum number of edges that need to 
be reoriented such that the stationary distribution obtained by a random walk on the resulting graph 
(obtained after reorientation of edges) is uniform.} from it? This is a problem of graph property testing, and
we consider this problem in the orientation model. % (introduced by Halevy \textit{et al.}).
We reduce this problem to testing Eulerianity in the orientation model.
%test (in the orientation model) whether a directed
%graph is Eulerian. %And using result of Fischer \textit{et al.} on query complexity of testing Eulerianity
%(in the orientation model), we obtain bounds on the query complexity for testing whether the stationary distribution is uniform.

\end{abstract}

\begin{keyword}
 Markov Chain; Property Testing; Orientation Model; Stationary Distribution.
\end{keyword}

\end{frontmatter}

%\chapter{Distribution testing}
%\chapter{Testing Uniformity of Stationary Distribution}

\section{Introduction}

{\em Spectral properties} of undirected graphs have been well studied and well understood 
\cite{chungspectral}. However, there has been less success   
in the study of the same in the case of directed graphs, possibly due the non-symmetric structure 
associated with its adjaceny matrix. In this work, we attempt to  understand the spectral properties 
of   Markov chains obtained by a random walk on a directed graph.
Markov chains are one of the most important   structures in Theoretical 
Computer Science. %The use of Markov chains in other fields like Physics, Biology, Economics, Statistics etc. cannot be overemphasized. 
The most significant characteristics of a Markov chain are  its stationary distribution and  mixing time. 
%Thus, it is very important to understand the structure of  stationary distribution of a Markov chain. 
It is an interesting problem to test if a particular distribution is a stationary 
distribution of a given Markov chain.  Since a stationary distribution of a Markov chain is a global property of the graph,  
this leads to the belief that 
whether a particular distribution is a stationary distribution of a Markov chain 
depends on its  global structure. % of that Markov chain.
%Usually a stationary distribution of a Markov chain is a global property of the graph, hence whether a particular distribution is a stationary distribution of a Markov chain depends  on the global 
%structure of that graph. One of the most commonly used Markov chains in Theoretical Computer Science is the Markov chain obtained by a random walk on a directed graph. 
%{\color{red}The paper should contain motivation for the problem that includes citations to relevant 
%papers in the Markov chain area. A good starting point would be to look up the term 'doubly stochastic' 
%and its implications on the known local/global nature of the problem. }

 In this paper, we focus on the Markov chain 
obtained by a random walk on a directed graph. 
 We prove contrary to aforementioned belief that if the graph 
 is regular, then whether the uniform distribution is a stationary distribution depends on a local property of the graph.
  The following theorem, which is the main result of this paper, 
  %(see  Section~\ref{section:structure}), 
  is a statement about 
  that local property. %(Following theorem is a restatement of  Theorem~\ref{theorem:main}.)

\begin{theorem}\label{theorem:main2}
Let $\overrightarrow{G} = (V,\overrightarrow{E})$ be a directed graph such that  the total degree 
(\textit{i.e.}, \textrm{Indegree(v)+ Outdegree(v)}) for every vertex $v \in V$ is the same. Then 
the uniform distribution on the vertices of $\overrightarrow{G}$ is a stationary distribution 
of the Markov chain (generated by a random walk on $\overrightarrow{G}$) if and only if 
the graph has the following properties:
\begin{enumerate}
\item for all $v \in V$, \textrm{Indegree(v)} $\neq 0$ and \textrm{Outdegree(v)} $\neq 0$,
\item for all $(u,v) \in \overrightarrow{E}$, \textrm{Outdegree(u)=Indegree(v)}.% the out-degree of $u$ is equal to  in-degree of $v$.
\end{enumerate}
\end{theorem}

\begin{comment}
\begin{theorem}\label{theorem:main}
%If $\overrightarrow{G} = (V,\overrightarrow{E})$ is a digraph such that  the total degree (that is $Indegree(v) \mbox{ $+$ } \linebreak Outdegree(v)$) for every vertex $v \in V$ is  same, 
Let $\overrightarrow{G} = (V,\overrightarrow{E})$ be a degree-$\Delta$ oriented 
graph\footnotemark[2].
Then, the uniform distribution on the vertices 
of $\overrightarrow{G}$ is a stationary distribution (for the Markov chain generated by a random walk on $\overrightarrow{G}$) if and only if 
the graph have the following properties:
\begin{enumerate}
\item for all $v \in V$, $Indegree(v) \neq 0$ and $Outdegree(v) \neq 0$,
\item for every edge $(u,v) \in \overrightarrow{E}$, $Outdegree(u)=Indegree(v)$.
\end{enumerate}
\end{theorem} 

\footnotetext[2]{A directed graph $\overrightarrow{G} = (V,\overrightarrow{E})$ is called a 
degree-\textbf{$\Delta$} oriented graph if:
    each edge is oriented and is not bidirectional;
    for all $v\in V$, $Indegree(v)\mbox{ + } Outdegree(v) = \Delta$.
   }
 \end{comment}  
%\newpage
As an application of this result, we design a testing algorithm to test whether the uniform distribution  is the stationary distribution of the Markov chain generated by a 
\textit{lazy random walk} \footnote{ A lazy random walk (starting from a particular vertex) on 
a directed graph is a random walk in which at each time step, the walk stays where it is with 
probability $\frac{1}{2}$ or moves according to the usual random walk. Moreover, 
it converges to a unique stationary distribution.} on $\overrightarrow{G}$.

\subsection{Application to Property Testing of Distributions}

In property testing, the goal is to look at a very small fraction of the input and decide 
whether the input has a certain property, or it is ``far'' from satisfying the property. 
For a given distance parameter $0<\epsilon<1$, we say that the input is an $\epsilon$-far from satisfying
the property if one has to change at least $\epsilon$ fraction of the input to make the input
satisfy the property. 
Theorem~\ref{theorem:main2} also has an application to the problem of testing whether a given distribution
is uniform or ``far'' from being uniform. More precisely, if the distribution is the stationary distribution of the
 lazy random walk %\footnote{ A lazy random walk always converges to a unique stationary distribution.} 
  on a directed graph and the graph is given as 
an input, then how many bits of the input graph do one need to query in order to decide whether the distribution
is uniform or ``far'' from it? We consider this problem in the \textbf{orientation model} (see \cite{halevy}).
In this  model, the underlying undirected graph $G = (V,E)$ is known in advance, 
%Each edge in $E$ is oriented (that is directed in exactly one direction). 
and the orientation of the edges has to be queried. The graph is said to be ``$\epsilon$-far" from satisfying the property $P$ if 
one has to reorient at least an $\epsilon$ fraction of the edges to make the graph satisfy the property.
We reduced this problem to testing Eulerianity in the orientation model. We use the results of \cite{euler} to obtain an algorithm that incurs sublinear cost for the 
above problem. We present this part of our result in  Section~\ref{section:testing}.

\section{Preliminaries}
\subsection{Graph Notations}
Throughout the paper, we will be dealing with directed graphs (possibly with multiple edges between any two vertices) 
in which each edge is directed only in one direction.  To avoid confusion, we will call them \textbf{oriented graphs} 
because each edge is oriented 
and is not bidirectional. We will denote the oriented graph by $\overrightarrow{G} = (V, \overrightarrow{E})$, 
and the underlying undirected graph (that is, when the directions on the edges are removed) by 
$G = (V,E)$. For a vertex $v \in V$, the in-degree and the out-degree of $v$ in $\overrightarrow{G}$ 
are denoted by $d^-(v)$ and $d^+(v)$ respectively. An oriented graph $\overrightarrow{G} = (V,\overrightarrow{E})$ is called 
a \emph{degree-{$\Delta$}} oriented graph if for all $v\in V$,
$d^-(v)\mbox{ + } d^+(v) = \Delta$. In this paper, we will be focusing on \emph{degree}-$\Delta$ oriented graphs. 

\subsection{Markov Chains Preliminaries}
\begin{fact}
 A Markov chain is a stochastic process on a set of states given by a transition matrix. Let $S$ be the set of states 
with $|S| = n$. Then, the transition matrix $T$ is an $n\times n$ matrix with entries from the positive reals; 
the rows and columns are indexed by the states; the $(u,v)$-th entry $T_{u,v}$ of the matrix denotes the probability of 
transition from state $u$ to state $v$. Since $T$ is stochastic, $\sum_v T_{u,v}$ must be $1$. 
A distribution $\mu:S \to \mathbb{R}^+$ on the vertices is said to be stationary if for all 
vertices $v$,$$\sum_{v} \mu(u) T_{u,v} = \mu(v).$$
\end{fact}

\begin{fact}

 If $\overrightarrow{G}$ is an oriented graph then a random walk on $\overrightarrow{G}$ defines a Markov 
chain, where, the states are the vertices of the graph; the probability to traverse an edge $(u,v)$ is given 
by the quantity $p_{u,v} = \frac{1}{d^+(u)}$; and hence, the transition 
probability $T_{u,v}$ from vertex $u$ to vertex $v$ is $p_{u,v}$ times the number of edges between $u$ and $v$. 
The uniform distribution on the vertices of $\overrightarrow{G}$ is a stationary distribution for this Markov chain if and only if for all $v\in V$,
$$ \sum_{u:(u,v) \in \overrightarrow{E}} p_{u,v} = 1 = \sum_{w:(v,w) \in \overrightarrow{E}} p_{v,w}.$$
 
\end{fact}

\begin{note}
 In this paper, we will only  consider  Markov chains that arise from random walks 
on $\overrightarrow{G}$, where $\overrightarrow{G}$ is an oriented graph. 
\end{note}

%oriented graphs. We will call them \textbf{random-walk-Markov-chain} on $\overrightarrow{G}$, where $\overrightarrow{G}$ is an oriented graph. 
%The uniform distribution on the vertices of $\overrightarrow{G}$ is a stationary distribution for this Markov chain if and only if for all $v\in V$,
%$$ \sum_{u:(u,v) \in \overrightarrow{E}} p_{u,v} = 1 = \sum_{w:(v,w) \in \overrightarrow{E}} p_{v,w}.$$

\begin{comment}
 \begin{note}
 As mentioned above, in the rest of the paper, we will be focusing on the Markov chain generated by a
 random walk on $\overrightarrow{G}$. So, any occurrence of  Markov chain needs to be 
 interpreted  in the similar way.
\end{note}
\end{comment}

\section{Structure of Graphs with Uniform Stationary Distribution}\label{section:structure}

%Denoting, $\mathcal{P}$, the property of directed graph $ \overrightarrow{G}=\{V, \overrightarrow{E} \}$, defined as follows
%\[
% \mathcal{P} \triangleq \{ d^-(u)=d^+(v), \,\ \forall (u,v) \in \overrightarrow{E} %\}
%\]

The following theorem is a rephrasing of   Theorem~\ref{theorem:main2}.

\begin{theorem}\label{theorem:uniformDist}
Let $\overrightarrow{G} = (V,\overrightarrow{E})$ be a degree-$\Delta$ oriented graph,
then the uniform distribution on the vertices of  $\overrightarrow{G}$ is a stationary 
distribution of the  Markov chain %then the uniform distribution is a stationary distribution 
if and only if for all $v\in V$, both $d^-(v), d^+(v) \neq 0$
and for all $(u,v) \in  \overrightarrow{E}$,
\[d^{+}(u) = d^-(v).\]
\end{theorem}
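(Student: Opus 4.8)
The plan is to work throughout with the reformulation recorded in the preliminaries: since each row of the transition matrix already sums to $1$, the uniform distribution is stationary precisely when every column sum equals $1$, i.e. when
$$\sum_{u:(u,v)\in\overrightarrow{E}} \frac{1}{d^+(u)} = 1 \qquad \text{for every } v\in V. \qquad (\ast)$$
For the random walk to be defined at all one needs $d^+(u)\neq 0$ for every $u$, which is the out-degree half of condition~1; and $(\ast)$ forces the sum over the in-edges of each $v$ to be nonempty, giving $d^-(v)\neq 0$, the in-degree half. The direction $(\Leftarrow)$ is then immediate: assuming condition~2, every term of the sum $(\ast)$ for a fixed $v$ equals $1/d^-(v)$ and there are exactly $d^-(v)$ of them, so the sum is $1$. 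The real content is the converse.

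For $(\Rightarrow)$ I would fix a vertex $v$ and let $x_1,\dots,x_k$, with $k=d^-(v)$, be the out-degrees $d^+(u)$ of its in-neighbours listed with multiplicity. Condition $(\ast)$ reads $\sum_i 1/x_i = 1$, so by Cauchy--Schwarz the bound $(\sum_i x_i)(\sum_i 1/x_i)\ge k^2$ gives
$$\sum_{u:(u,v)\in\overrightarrow{E}} d^+(u) \;\ge\; d^-(v)^2, \qquad (\diamond)$$
with equality if and only if all the $x_i$ coincide. Summing $(\diamond)$ over all $v$ and regrouping the left-hand side by the tail of each edge (each $u$ is the tail of exactly $d^+(u)$ edges, each contributing $d^+(u)$), I obtain $\sum_{u} d^+(u)^2 \ge \sum_{v} d^-(v)^2$.

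The final step, which is where the degree-$\Delta$ hypothesis does its real work, is to show that this last inequality is in fact an equality. Writing $n=|V|$ and using $d^+(u)=\Delta-d^-(u)$ together with the handshake identity $\sum_u d^-(u)=\sum_u d^+(u)=n\Delta/2$, expanding $\sum_u(\Delta-d^-(u))^2$ makes the cross term collapse and yields $\sum_u d^+(u)^2=\sum_v d^-(v)^2$ identically. Thus the sum of the nonnegative slacks in $(\diamond)$ vanishes, so every instance of $(\diamond)$ is tight; hence for each $v$ all of its in-neighbours share one common out-degree, and substituting back into $(\ast)$ pins that common value to $d^-(v)$. This is exactly condition~2. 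I expect the main obstacle to be precisely the discovery of this global identity: the per-vertex Cauchy--Schwarz step only produces an inequality, and a purely local analysis of $(\ast)$ cannot succeed, since for instance the out-degrees $2,3,6$ satisfy $\tfrac12+\tfrac13+\tfrac16=1$ yet violate condition~2; it is the observation that regularity forces $\sum_u d^+(u)^2=\sum_v d^-(v)^2$ that upgrades the inequality to the vertex-by-vertex rigidity the theorem demands.
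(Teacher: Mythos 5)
Your proof is correct, and it takes a genuinely different route from the paper's. Both start from the same reformulation: uniformity is stationary iff $\sum_{u:(u,v)\in\overrightarrow{E}} 1/d^+(u)=1$ for every $v$, with the easy direction handled identically. For the converse, the paper argues combinatorially: it first claims (via a cut-balance condition) that stationarity forces every connected component to be strongly connected, then constructs a ``degree-alternating'' walk $v_0,v_1,v_2,\dots$ that alternately steps to an in-neighbour of minimum out-degree and one of maximum out-degree, proves that the associated integer sequence $s_{2k}=d^-(v_{2k})$, $s_{2k+1}=d^+(v_{2k+1})$ is non-increasing and strictly decreases whenever $d^+(v_{i+1})\neq d^-(v_i)$, and finally reaches a contradiction by routing the walk through a violating edge infinitely often, since a non-increasing sequence of positive integers cannot decrease infinitely many times. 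Your argument replaces this descent with a single global convexity computation: per-vertex Cauchy--Schwarz gives $\sum_{u:(u,v)\in\overrightarrow{E}} d^+(u)\ge d^-(v)^2$, summing over $v$ and regrouping by tails gives $\sum_u d^+(u)^2\ge \sum_v d^-(v)^2$, and the degree-$\Delta$ hypothesis---entering exactly once, through $d^+(u)=\Delta-d^-(u)$ and $\sum_u d^-(u)=n\Delta/2$---collapses the cross term and makes the two sides identically equal, so every Cauchy--Schwarz instance is tight, all in-neighbours of each $v$ share one out-degree, and $(\ast)$ pins it to $d^-(v)$. Your route is shorter, bypasses the strong-connectivity step entirely (which the paper justifies only sketchily), works verbatim for disconnected multigraphs, and isolates exactly where regularity is indispensable---your $2,3,6$ example with $\frac12+\frac13+\frac16=1$ neatly certifies that no purely local analysis could suffice, which is precisely why the paper is driven to its global walk argument. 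What the paper's approach buys in exchange is an inequality-free, purely order-theoretic mechanism (min/max comparisons along a walk) that exhibits how a single violating edge propagates through a strongly connected component; but it carries the strong-connectivity lemma as an extra dependency that your proof simply does not need.
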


\begin{proof}
First of all, recall that the uniform distribution is a stationary
distribution for $\overrightarrow{G}$, if and only if for all $v \in V$,
$$\sum_{u :(u,v) \in \overrightarrow{E}}p_{u,v} = 1 = \sum_{w :(v,w) \in \overrightarrow{E}}p_{v,w},$$
where $p_{u,v}$ is the transition probability from vertex $u$ to vertex $v$, and $p_{u,v} = \frac{1}{d^+(u)}$.
%Since we are doing a random walk on an unweighted oriented graph, for all $(u, v) \in \overrightarrow{E}$, $p_{u,v} = \frac{1}{d^+(u)}$.

Thus, if the graph $\overrightarrow{G}$ has the property that for all $(u,v) \in  \overrightarrow{E}$,
$d^+(u) = d^-(v)$, then
$$\sum_{u :(u,v) \in \overrightarrow{E}}p_{u,v} = \sum_{u :(u,v) \in \overrightarrow{E}}\frac{1}{d^+(u)} =
\sum_{u :(u,v) \in \overrightarrow{E}}\frac{1}{d^-(v)} = 1.$$ The last equality holds because the summation is over all the
edges entering $v$ (which is non-empty), and thus there are $d^-(v)$ number of items in the summation.
Similarly, $$\sum_{w :(v,w) \in \overrightarrow{E}}p_{v,w} = \sum_{w :(v,w) \in \overrightarrow{E}}\frac{1}{d^+(v)} = 1.$$
Therefore, if the graph $\overrightarrow{G}$ has the property that for all $(u,v) \in  \overrightarrow{E}$, $d^+(u) = d^-(v)$, then 
the uniform distribution is a stationary distribution of the Markov chain.

Now, let us prove the other direction, that is, let us assume that the
uniform distribution is a stationary distribution of the Markov
chain. We prove this direction by  contradiction.  The central idea of the 
proof is the notion of {\em``degree-alternating''} path (see Definition~\ref{defi:degalternating} ). 
Now, if there is an edge $(u, v)$ such that $d^+(u)\neq d^-(v)$, then we show 
that there is a {\em degree-alternating} path with infinitely many {\em unbalanced edges} 
(see Definition~\ref{defi:unbalanced}). Further, by Lemma~\ref{lemma:decSequence}, if there is a {\em degree-alternating} 
path with infinitely many {\em unbalanced edges}, then there is a vertex having negative in-degree or out-degree, 
which is not possible, and we get a contradiction.

We start the proof with a 
couple of easy observations. 

%The main idea is that if
%there exists a directed edge $(u,v)$ such that $d^+(u) \neq d^-(v)$
%then we can find an infinitely long path where a potential 

\begin{obs}\label{obs:two}
If the uniform distribution is a stationary distribution of
the Markov chain then:
\begin{enumerate}
%\item There is a path from vertex $u$ to vertex $v$ if and only if $u$   and $v$ are in the same strongly connected component of
%  $\overrightarrow{G}$.  
  \item Every connected component in the underlying graph is strongly connected.
\item For a vertex $v$,  there exists an edge $(u, v)$
  such that $d^+(u) < d^-(v)$ if and only if there exists an edge $(w,
  v)$ such that $d^+(w) > d^-(v)$. 
\end{enumerate}
\end{obs}

%This is because the uniform distribution is a stationary distribution if and only if for every
% cut $C = V_1 \cup V_2$, where $V_2=(V\backslash V_1)$, we have $$\sum_{(u,v) \in \overrightarrow{E},\mbox{ and } u\in V_1, v\in V_2}p_{u,v}
% = \sum_{(u,v) \in \overrightarrow{E},\mbox{ and } u\in V_2, v\in V_1}p_{u,v}.$$
%In other words, if a stationary distribution is uniform, then it implies that every connected component in the undirected graph is 
%strongly connected in the directed graph.

Using the above observations, we prove our theorem by
contradiction. %The following is an important definition:

\begin{defi}\label{defi:degalternating}
A path $\{v_i\}_{i=0}^t$  is called a ``degree-alternating''
path if  the following  conditions are satisfied:
\begin{itemize}
\item For all $i\geq 0$,  $(v_{i+1}, v_{i})\in \overrightarrow{E}$, 
\item For all $i\geq 0$, $d^+(v_{2i+1})= \min\left\{d^+(w)\ :\ {(w,v_{2i})\in \overrightarrow{E}}\right\}$,
\item For all $i> 0$, $d^+(v_{2i})= \max\left\{d^+(w)\ :\ {(w,v_{2i -
        1})\in \overrightarrow{E}}\right\}$.
\end{itemize}
Note that the above  degree-alternating path can have repeated vertices (or edges). 
\end{defi}

\begin{defi}\label{defi:unbalanced}
We call an edge $(u,v)$ ``unbalanced'' if $d^+(u) \neq d^-(v)$.   Also, 
we call a vertex $w$ ``unbalanced'' if there is an edge $(w', w)$ such 
that $d^+(w') \neq d^-(w)$.  
\end{defi}

Now, we will show that if a strongly connected graph has one
``unbalanced'' edge, then there must be a ``degree-alternating''  path
with infinitely many unbalanced edges.  And this along with the
Lemma~\ref{lemma:decSequence}  will give a contradiction.

If the graph does not have any degree-alternating path with infinitely
many unbalanced edges, then there must be a degree-alternating path
ending at an unbalanced edge $(u,v)$ such that the path cannot be
extended to a longer path with more unbalanced edges. We show that
this is not possible by showing that we can extend a given degree-alternating path to a 
longer degree-alternating path, with at least one more unbalanced
edge. 

\begin{comment}
Note that if $u$ is an unbalanced vertex, then clearly we can extend
the degree-alternating path ending at $(u, v)$ by one more unbalanced
edge (using Observation~\ref{obs:two}). Similarly, if there is a path
from an unbalanced vertex to $u$, then we can extend the
degree-alternating path, and increase the number of unbalanced edges. 
Now by Observation~\ref{obs:two}, since there is a path from $v$ to $u$,
we can surely extend the degree-alternating path. Hence there is a
degree-alternating path with infinitely many unbalanced edges. 
\end{comment}
Note that if $u$ is an unbalanced vertex, then by Observation~\ref{obs:two}, we can extend
the degree-alternating path ending at $(u, v)$ by one more unbalanced
edge $(v, w)$ such that $ d^-(v) \neq d^+(w)$. Similarly, if there is a path
from an unbalanced vertex to $u$, then we can extend the
degree-alternating path with an edge $(u, w')$ such that $ d^-(u) \neq d^+(w')$. % and increase the number of unbalanced edges. 
Further by Observation~\ref{obs:two}, since  underlying graph is 
strongly connected, there is a path from $v$ to $u$, then using above arguments
we can surely extend the degree-alternating path. 
As a consequence, we get  a
degree-alternating path with infinitely many unbalanced edges.

Now if we apply to Lemma~\ref{lemma:decSequence} on the
degree-alternating path with infinitely many unbalanced edges, and from
the fact that the in-degree or out-degree of a  vertex cannot be
negative, we get a contradiction. This completes the proof of the theorem.
\end{proof}

\begin{lemma}\label{lemma:decSequence} %Let $\{v_0, v_1, \dots \}$ be a ``degree-alternating" path.
%If we define a new sequence $\{s_0, s_1, \dots\}$ of positive integers as following:
Let $\{v_i\}_{i=0}^t$ be a ``degree-alternating" path.
Suppose we define we define a new sequence $\{s_i\}_{i=0}^t$ of positive integers as following:
\begin{itemize}
\item For all $k\geq 0$, $s_{2k}$=$d^-(v_{2k})$, and for all $k \geq 0$, $s_{2k+1}$=$d^+(v_{2k+1})$.
%\item For all $k \geq 0$, $s_{2k+1}$=$d^+(v_{2k+1})$,
\end{itemize}
Then this sequence of positive integers is a non-increasing sequence.
Moreover, if $v_{i}$ and $v_{i+1}$ are two consecutive vertices in the sequence such that
$d^-(v_{i+1})\neq d^+(v_i)$, then $s_{i+1} < s_i$.
\end{lemma}

%Using this claim, we can conclude the proof of  Theorem.
%Let there be one vertex $w \in V$ such that $d^+(u)\neq d^-(w)$ for some edge $(u,w) \in \overrightarrow{E}$.
%Assume, contrary to the claim in Theorem~\ref{theorem:uniformDist},  there exists an
%edge $(u', w) \in \overrightarrow{E}$ such that $d^+(u')\neq d^-(w)$.
%Let $w'$ be one such vertex such that $(w',w) \in \overrightarrow{E}$, and
%$$d^+(w') = \min\{d^+(u'):{(u',w)\in \overrightarrow{E}}\}.$$

%Since we have already argued that in the graph every connected component has to be strongly connected, we can
%create an infinite sequence of vertices such that $w$ and $w'$ appear consecutively and infinitely often.

%Since we have already argued that in $\overrightarrow{G}$ every connected component has to be strongly connected -
%there exists a directed path from vertex $w$ to $w'$. Thus, we can
%create an infinite sequence of vertices such that $w$ and $w'$ appear consecutively and infinitely often.
%Now by  Claim~\ref{claim:decSequence},  the sequence $\{s\}_{i=0}^t$ is a non-increasing sequence that decreases
%infinitely many times. But this cannot happen as all the numbers in the sequence $\{s\}_{i=0}^t$ represent in-degree or out-degree
%of vertices and hence, are always finite integers and can never be negative.  Thus, if there exists one vertex $w \in V$ such that
%$d^+(u')\neq d^-(w)$ for some edge $(u',w) \in \overrightarrow{E}$, then we reach a contradiction.

%Thus, for all edges $(u,v) \in \overrightarrow{E}$,
%$ d^+(u)=d^-(v).$
%\end{proof}

%\begin{proof}[Proof of Claim \ref{claim:decSequence}]
\begin{proof}%[of Lemma:~\ref{lemma:decSequence}]
Since we have assumed that the uniform distribution is a stationary distribution of 
the Markov chain, then for all vertices $v$,
$$\sum_{u:(u,v) \in \overrightarrow{E}}\frac{1}{d^+(u)} = 1 = \sum_{w:(v,w) \in \overrightarrow{E}}\frac{1}{d^+(v)}.$$

Let us first prove that in the sequence $\{s_i\}_{i=0}^t$, $s_{2i} \geq s_{2i +1}$.
\\
Since $d^+(v_{2i+1})= \min\left\{d^+(w)\ :\ {(w,v_{2i})\in \overrightarrow{E}}\right\}$,
 \begin{equation}\label{eqn:1st}
1 = \sum_{(u,v_{2i}) \in \overrightarrow{E}}\frac{1}{d^+(u)} \leq \frac{d^-(v_{2i})}{d^+(v_{2i+1})},
\end{equation}
and hence, we have $d^-(v_{2i}) \geq d^+(v_{2i+1})$ which by definition gives $s_{2i} \geq s_{2i+1}$.

Now, let us also prove that in the sequence $\{s_i\}_{i=0}^t$, $s_{2i-1} \geq s_{2i}$. By definition, this is the same as proving
$d^+(v_{2i-1}) \geq d^-(v_{2i})$. Since we have assumed that the graph $\overrightarrow{G}$ is a \emph{degree}-$\Delta$ graph, proving that 
$s_{2i-1} \geq s_{2i}$ is same as proving $d^-(v_{2i-1}) \leq d^+(v_{2i})$.

Similar to previous case, since $$d^+(v_{2i})= \max\left\{d^+(w)\ :\ {(w,v_{2i-1})\in \overrightarrow{E}}\right\}$$ and 
\begin{equation}
\label{eqn:2nd} 1 = \sum_{(u,v_{2i-1}) \in \overrightarrow{E}}\frac{1}{d^+(u)} \geq \frac{d^-(v_{2i-1})}{d^+(v_{2i})}.
\end{equation}  Thus, we have
$s_{2i-1} \geq s_{2i}$.

Note that the inequalities in  Equations~\ref{eqn:1st} and~\ref{eqn:2nd} are  strict inequalities if $d^+(v_{2i+1}) \neq d^-(v_{2i})$ 
and $d^+(v_{2i-1}) \neq d^-(v_{2i})$ respectively. Thus, if for any $i$, $d^+(v_{i+1}) \neq d^-(v_{i})$, then $s_{i+1} < s_i$.
This completes the proof of the lemma.
\end{proof}
%\end{proof}

From  Theorem~\ref{theorem:uniformDist}, we can also obtain the
following corollary. %Due to space limit, we present the proof to the Appendix~\ref{lemma:bipartite}.
 
\begin{corollary}\label{lemma:bipartite}
Let $\overrightarrow{G} = (V, \overrightarrow{E})$ be a connected degree-$\Delta$ oriented 
graph. Then, the uniform distribution on the vertices of  $\overrightarrow{G}$ is a stationary 
distribution of the Markov chain, if and only if the following conditions apply:
\begin{enumerate}
\item  If the underlying undirected graph $G=(V,E)$ is non-bipartite, then the graph $\overrightarrow{G}$ is Eulerian.
\item  If $G$ is bipartite with bipartition $V_1 \cup V_2 = V$, then $|V_1| = |V_2|$ and the in-degree 
       of all vertices in one part will be same, and it will be equal to out-degree of all vertices in the other part.
\end{enumerate}
\end{corollary}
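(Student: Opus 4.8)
The plan is to reduce both cases to the characterization of Theorem~\ref{theorem:uniformDist}, namely that for a degree-$\Delta$ oriented graph the uniform distribution is stationary if and only if every vertex has $d^-(v), d^+(v) \neq 0$ and $d^+(u) = d^-(v)$ holds for every edge $(u,v) \in \overrightarrow{E}$. The first thing I would do is convert the edge condition into a symmetric statement on the underlying undirected graph $G$. Since the graph is degree-$\Delta$, we have $d^-(v) = \Delta - d^+(v)$, so the requirement $d^+(u) = d^-(v)$ on a directed edge $(u,v)$ becomes $d^+(u) + d^+(v) = \Delta$; as this equation is symmetric in $u$ and $v$, it holds for the undirected edge $\{u,v\}$ irrespective of its orientation. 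Thus, writing $a(v) := d^+(v)$, stationarity of the uniform distribution (given all degrees nonzero) is equivalent to $a(u) + a(v) = \Delta$ for every edge $\{u,v\} \in E$.

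Next I would analyze the labeling $a$ using connectivity of $G$. Fixing any vertex, its value $a$ forces the value $\Delta - a$ at every neighbour, which in turn forces the values at their neighbours, and so on; by connectivity $a$ takes at most the two values $\alpha$ and $\Delta - \alpha$, and when these are distinct the level sets $\{a = \alpha\}$ and $\{a = \Delta - \alpha\}$ form a bipartition of $G$, since every edge joins the two. This is where the case split enters. If $G$ is non-bipartite, an odd cycle forces the alternating labels to close up inconsistently unless $\alpha = \Delta - \alpha$, so $a \equiv \Delta/2$; hence $d^+(v) = d^-(v) = \Delta/2$ for all $v$, which together with connectivity (strong connectivity being already guaranteed by stationarity, as observed in the proof of Theorem~\ref{theorem:uniformDist}) says precisely that $\overrightarrow{G}$ is Eulerian. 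The converse is immediate: an Eulerian degree-$\Delta$ graph is balanced, so $d^+(u) = \Delta/2 = d^-(v)$ on every edge, and Theorem~\ref{theorem:uniformDist} applies.

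For the bipartite case with parts $V_1$ and $V_2$, I would use that $a$ is constant on each part (all vertices of a part lie at even mutual distance), say $a \equiv \alpha$ on $V_1$ and $a \equiv \Delta - \alpha$ on $V_2$. Reading this off gives $d^+(v) = \alpha$, $d^-(v) = \Delta - \alpha$ for $v \in V_1$, and $d^+(v) = \Delta - \alpha$, $d^-(v) = \alpha$ for $v \in V_2$; in particular the in-degree of every vertex of $V_2$ equals $\alpha$, the common out-degree of the vertices of $V_1$, which is the asserted cross-equality. To obtain $|V_1| = |V_2|$ I would count the directed edges from $V_1$ to $V_2$ in two ways: they number $|V_1|\alpha$ (the total out-degree of $V_1$) and also $|V_2|\alpha$ (the total in-degree of $V_2$), and since stationarity forces $\alpha = d^+|_{V_1} \neq 0$, cancelling $\alpha$ yields $|V_1| = |V_2|$. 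Conversely, the stated degree-uniformity makes $d^+(u) = d^-(v)$ hold on every crossing edge in either direction, so Theorem~\ref{theorem:uniformDist} again gives stationarity.

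The main obstacle I anticipate is not any single computation but keeping the equivalences tight at the boundaries: verifying that the nonzero-degree hypothesis of Theorem~\ref{theorem:uniformDist} is available exactly when needed (it is what legitimises cancelling $\alpha$ in the edge count and what rules out degenerate orientations), and pinning down that ``Eulerian'' for a connected balanced digraph is equivalent to $d^+ \equiv d^-$, so that the non-bipartite conclusion genuinely delivers the Eulerian condition rather than something weaker.
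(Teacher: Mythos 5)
Your proposal is correct and follows essentially the same route as the paper: both reduce to Theorem~\ref{theorem:uniformDist}, symmetrize the edge condition $d^+(u)=d^-(v)$ so it holds irrespective of orientation (your $a(u)+a(v)=\Delta$ formulation), use an odd cycle to force $d^+\equiv d^-\equiv\Delta/2$ in the non-bipartite case, and use parity of distances to get constant out-degrees on each side in the bipartite case. If anything, you are more complete than the paper, which leaves the $|V_1|=|V_2|$ claim implicit, whereas you supply the double count $|V_1|\alpha=|V_2|\alpha$ with the cancellation justified by the nonzero-degree hypothesis.
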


\begin{proof}
From  Theorem~\ref{theorem:uniformDist}, it follows that  the uniform distribution on the vertices of $\overrightarrow{G}$ 
is a stationary distribution of the  Markov chain  if and only if for all $(u,v) \in \overrightarrow{E}$, we have
\begin{equation}\label{eqn:struc} d^+(u) = d^-(v).\end{equation}
\begin{comment}
 Now,  in the underlying undirected graph $G$ if 
there is a path of length $2$ between $u$ and $v$ through $w$ (that is
$(u, w)$ and $(w, v)$ are edges in the underlying undirected graph)
then $d^+(u) = d^+(v)$.  There can be four cases depending on the
orientation of the edges $(u, w)$ and $(w, v)$. It is easy to see that
in all the four case from Equation~\ref{eqn:struc} we have $d^+(u) =
d^+(v)$.  Thus if there a path of even length between $u$ and $v$ in
the underlying undirected graph $G$ (that is, in $G$ there is path
from $u$ to $v$ using even number of edges) then $d^+(u) = d^+(v)$. 
\end{comment}

Suppose  $G$ has 
   a path of length $2$ between vertices $u$ and $w$ through $v$ (that is
$(u, v)$ and $(v, w)$ are  edges in  $G$). %then $d^+(u) = d^+(w)$.  
Then it is easy to verify that in all four possible cases (depending on the
orientation of the edges $(u, v)$ and $(v, w)$),  we have $d^+(u) =
d^+(w)$ (by Equation~\ref{eqn:struc}).  
Similarly,  it is also easy to see that
  if there a path of even length between $u$ and $w$ in
  $G$ (a path from $u$ to $w$ using even number of edges), then we have $d^+(u) = d^+(w)$. 

From this we can prove both parts of the corollary. 

%If the undirected graph has an odd cycle passing through $v$ and
% $(u, v)$ be an edge on the cycle, then $d^+(v) = d^+(u) = d^-(v),$
%where the last equality follows from Equation~\ref{eqn:struc}.  

Let $v$ be a vertex of any odd cycle in $G$,  and
 $(u, v)$ be an edge belong to that cycle, then $d^+(v) = d^+(u) = d^-(v)$
(the last equality follows from Equation~\ref{eqn:struc}).  
And now note that, if there exist a vertex having in-degree  equal to
out-degree and Equation~\ref{eqn:struc} holds, then  all the  
vertices in the strongly-connected component have in-degree  equal to
out-degree. Further, since $G$ is connected, 
$\overrightarrow{G}$ must be Eulerian. Also note that if $\overrightarrow{G}$ is Eulerian,
then the uniform distribution is a stationary distribution. 

For the case of bipartite graphs, if $V_1$ and $V_2$ is the
bi-partition, then any two vertices in $V_1$ are connected by an even
length path, and hence all the vertices in $V_1$ has the same
out-degree, and since all the vertices have the same total degree so all
the vertices in $V_1$ must have same in-degree also. Similarly, all the
vertices in $V_2$ have the same out-degree and the same in-degree. 
And if there is a directed edge from a vertex in $V_1$ to vertex in
$V_2$ (or vice versa), then  from Equation~\ref{eqn:struc}, we have the
out-degree of vertices in $V_1$ equals the in-degree of the vertices in
$V_2$.
\end{proof}

 Theorem~\ref{theorem:uniformDist} and  Corollary~\ref{lemma:bipartite} have an application to property 
testing. We present this application in the next section. 

%\textbf{\Large{\\Appendix}}

%\section{Proof of Corollary \ref{lemma:bipartite}}

\section{Application to Property Testing}\label{section:testing}
A property of a graph that is invariant under graph isomorphism is called a graph property. 
Testing of graph properties has been a very active topic in the area of property testing 
(see \cite{Fischer04}, \cite{ron}), and  have been studied under various
query models, for example: dense-graph models, sparse-graph models, orientation model etc.
Note that, whether the uniform distribution on the vertices of $\overrightarrow{G}$ is a stationary distribution of the
Markov chain, is a graph property. 
%If the input is a graph (as in our case), then how to query the 
%input is very important. Various models have been studied in this respect, for example: 
%dense-graph models, sparse-graph models, orientation model. 
%Note that in this section, we assume that Markov chain is generated by lazy random %walk on the vertices of directed graph and as 
%a consequence, stationary distribution  will be unique.
Thus, the problem of distinguishing between whether the stationary distribution of the Markov chain 
is the uniform distribution, or is ``far'' from it, is a question of testing graph properties. 
Here, this question has been framed in the \textit{orientation model} (defined in the next 
subsection).  Some interesting graph properties like connectivity \cite{stcon} and Eulerianity
 \cite{euler} have been studied in this model. Using  Theorem~\ref{theorem:main2}, we show that for 
 both bipartite and non-bipartite graphs, testing (in the orientation model) whether the uniform 
 distribution on vertices of $\overrightarrow{G}$ is the stationary distribution of the Markov 
 chain (generated by a \textit{lazy random walk} on $\overrightarrow{G}$), can be reduced to 
 testing if the graph is Eulerian. Using  algorithms from \cite{euler} of  testing Eulerianity 
 in the orientation model, we obtain various bounds on the
query complexity for testing uniformity of the stationary distribution.% for the Markov chain generated by a random walk on $\overrightarrow{G}$. 

%In \cite{euler}, it is shown that if $G$ is an $\alpha$-expander, then whether $G$ is Eulerian can 
%be tested using $O(\Delta/\alpha)$ number of queries in the orientation model. 
In \cite{euler} it is shown that if $G$ is an $\alpha$-expander, then it is possible to test, 
in the orientation model, whether $G$ is Eulerian by performing $O(\Delta/\alpha)$ queries.
From our result, it implies that if $G$ is an $\alpha$-expander, then testing uniformity of the 
stationary distribution can be done with $O(\Delta/\alpha)$ queries. 
Since $1/\alpha$ is also a measure of the mixing time of the random walk, it implies that 
the query complexity for testing uniformity of the stationary distribution 
is directly proportional to the mixing time of the Markov chain.

\subsection{Property Testing in the Orientation Model}
Given an oriented graph $\overrightarrow{G} = (V,\overrightarrow{E})$ and a 
property $\mathcal{P}$, we want to test whether $\overrightarrow{G}$ satisfies the 
property or it is ``$\epsilon$-far" from satisfying the property. In the orientation model, the underlying graph $G = (V,E)$ is known in advance. Each edge in $E$ is 
oriented (that is directed in exactly one direction), and the orientation of the edges 
has to be queried. The graph is said to be ``$\epsilon$-far" from satisfying the 
property $\mathcal{P}$ if one has to reorient at least an $\epsilon$ fraction of the edges to 
make the graph satisfy the property. Here, the goal is to design a randomized algorithm that  
queries  the orientation of the edges, and does the following:
\begin{itemize} 
\item the algorithm ACCEPTS with probability at least $2/3$, if  $\overrightarrow{G}$ satisfies the property $\mathcal{P}$, 
\item the algorithm REJECTS with probability at least $2/3$, if $\overrightarrow{G}$ is ``$\epsilon$-far" from satisfying the property $\mathcal{P}$. 
      %then the tester REJECTS with probability at least $2/3$.
\end{itemize}
The query complexity of the algorithm is defined by the number of edges  it queries.
%The number of edges queried by the algorithm is the query complexity of the algorithm. 
The natural goal is to design a tester for $\mathcal{P}$ with  minimum query complexity. 
If the graph satisfies the property and the tester accepts with probability $1$, then 
the tester is called a \emph{$1$-sided error tester}. 
 The standard tester (as defined above) is called a \emph{$2$-sided 
error tester}.

The  orientation model for testing graph properties was introduced by Halevy \textit{et al.} \cite{halevy}.   
%Some interesting graph properties like connectedness and Eulerianity has been studied in this model.
 Fischer \textit{et al.} studied the problem of testing whether an oriented graph $\overrightarrow{G}$ 
is Eulerian~\cite{euler}. They derived various upper 
and lower bounds for both  $1$-sided and $2$-sided error testers. %They even considered  various special classes of graphs such as  expanders. 
Here, we use their algorithms for testing 
whether the uniform distribution is the stationary distribution of the Markov chain. 

\begin{comment}
 
The reason of choosing the orientation model (not any other graph model like dense, sparse model etc.) 
for property testing is because in this model, we need to know the underlying undirected graph 
in advance, which is a regular graph in our case. And we query orientation of edges in order to 
distinguish whether the input satisfies the property or is far from satisfying it, which in our case 
is to test whether the stationary distribution is the uniform distribution, or is far from being it.

\end{comment}
%\subsection{Testing whether Uniform Distribution is a Stationary   Distribution in the Orientation Model}\label{sec:testing}
\subsection{Testing whether the Uniform Distribution is a Stationary Distribution in the Orientation Model}\label{sec:testing}
Given a degree-$\Delta$ oriented graph $\overrightarrow{G}=(V, \overrightarrow{E}) $, we say that the graph 
has the property $\mathcal{P'}$ 
if for all $(u,v) \in \overrightarrow{E}$, we have  $d^+(u) = d^-(v).$
%\footnote{In this subsection, we assume that the underlying undirected graph of $\overrightarrow{G}$ is connected. }
Since the underlying undirected graph is known in advance, we know its connected components. 
If the graph $\overrightarrow{G}$ is ``$\epsilon$-close" to satisfy the property $\mathcal{P'}$, then by 
{\em ``pigeonhole principle''} there is at  least one connected component of $\overrightarrow{G}$ that is also  
``$\epsilon$-close" to satisfy   the property 
$\mathcal{P'}$. Thus, if all connected components are ``$\epsilon$-far" from satisfying  the property 
$\mathcal{P'}$, then we can conclude that $\overrightarrow{G}$ is ``$\epsilon$-far" from satisfying the 
property $\mathcal{P'}$ (contrapositive of the previous statement). Thus, we  perform  testing on every 
connected-component of $\overrightarrow{G}$, and we reject
 if all connected components are ``$\epsilon$-far" from satisfying  the property 
$\mathcal{P'}$. Now, \textit{w.l.o.g.}, we will assume that the graph $\overrightarrow{G}$ is connected.
%Thus, we can do  testing connected-component wise.% by selecting a connected component.% and \textit{w.l.o.g.}, 
%we can assume that the graph $\overrightarrow{G}$ is connected.

From Corollary~\ref{lemma:bipartite}, if $\overrightarrow{G}$ is non-bipartite then we have to 
test whether $\overrightarrow{G}$ is Eulerian. Since we can determine whether a graph is bipartite 
or not just by looking at the underlying undirected graph, if $\overrightarrow{G}$ is non-bipartite 
then we use the  Eulerianity testing algorithm from \cite{euler}.
 
 Now, let $\overrightarrow{G}$ be bipartite. Let the bipartition be $V_{L}$ and $V_{R}$. If $|V_{L}| \neq |V_{R}|$ 
then the graph surely does not satisfies the property $\mathcal{P'}$. From Corollary~\ref{lemma:bipartite}, 
if $|V_{L}| = |V_{R}|$ then the graph must have the property 
that the out-degree of all vertices in $V_{L}$ must be equal to the in-degree of all vertices in $V_{R}$ and vice versa. 
Let $v$ be a vertex in $V_{L}$ and  $d^-(v)=k_1$ and $d^+(v)= k_2$. Now, consider any bipartite directed graph 
$\overrightarrow{G^*} = (V, \overrightarrow{E^*})$ with bipartition $V_{L}$ and $V_{R}$ that satisfies the 
following conditions:
\vspace{-0.2 cm}
 \begin{itemize}
 \item The underlying undirected graphs of $\overrightarrow{G}$ and $\overrightarrow{G^*}$ are exactly the same, 
 \item $\forall u \in V_{L}$, $d^-_{\overrightarrow{G}^*}(u)=k_2$, $d^+_{\overrightarrow{G}^*}(u)=k_1$,  and
  $\forall v \in V_{R}$, $d^-_{\overrightarrow{G}^*}(v) = k_1$, $d^+_{\overrightarrow{G}^*}(v)=k_2$.
 \end{itemize}
\vspace{-0.2 cm}
Now consider the graph $\overrightarrow{G^{\cup}} = (V, \overrightarrow{E}\cup\overrightarrow{E^*})$ obtained 
by superimposing $\overrightarrow{G}$ and $\overrightarrow{G^*}$ - an edge $e\in \overrightarrow{E}^{\cup}$ if 
either $e\in  \overrightarrow{E}$, or $e\in \overrightarrow{E^*}$. 
Clearly, if $\overrightarrow{G}$ has the property $\mathcal{P'}$,
then $\overrightarrow{G^{\cup}}$ is Eulerian, 
and farness from having property $\mathcal{P'}$ is also true by the following lemma:
\begin{lemma}\label{prop}
If $\overrightarrow{G}$ is ``$\epsilon$-far" from having property $\mathcal{P'}$, then 
$\overrightarrow{G^{\cup}}$ is ``$\frac{\epsilon}{2}$-far"
from being Eulerian.
\end{lemma}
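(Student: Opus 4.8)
The plan is to prove the contrapositive, after first converting both farness notions into absolute counts of edge-reorientations. Writing $|\overrightarrow{E}|=|E|$ for the number of edges of $\overrightarrow{G}$, the superimposed graph $\overrightarrow{G^{\oplus}}$ has exactly $2|E|$ edges. Hence ``$\overrightarrow{G}$ is $\epsilon$-far from $\mathcal{P}$'' means at least $\epsilon|E|$ reorientations are needed to give $\overrightarrow{G}$ the property $\mathcal{P}$, while ``$\overrightarrow{G^{\oplus}}$ is $\tfrac{\epsilon}{2}$-far from Eulerian'' means at least $\tfrac{\epsilon}{2}\cdot 2|E|=\epsilon|E|$ reorientations are needed to Eulerize $\overrightarrow{G^{\oplus}}$. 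The factor $\tfrac12$ is therefore pure bookkeeping coming from the doubling of the edge set, and the lemma is equivalent to the clean inequality that the minimum number of reorientations needed to make $\overrightarrow{G^{\oplus}}$ Eulerian is at least the minimum number needed to give $\overrightarrow{G}$ the property $\mathcal{P}$. I would establish this by converting any Eulerization of $\overrightarrow{G^{\oplus}}$ into a repair of $\overrightarrow{G}$ using no more reorientations.

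First I would record the imbalance dictionary. For an orientation let $b(v)=d^+(v)-d^-(v)$. Since $\overrightarrow{G}$ is a connected degree-$\Delta$ bipartite graph with $|V_L|=|V_R|$, Corollary~\ref{lemma:bipartite} shows that $\mathcal{P}$ is equivalent to a single constant imbalance profile: every $v\in V_L$ has $b(v)=k_2-k_1$ and every $v\in V_R$ has $b(v)=k_1-k_2$ (the total degree being fixed, prescribing $b$ prescribes both $d^+$ and $d^-$). By construction $\overrightarrow{G^*}$ realizes exactly the opposite constant profile, so at every vertex $b_{\overrightarrow{G^{\oplus}}}(v)=b_{\overrightarrow{G}}(v)+b_{\overrightarrow{G^*}}(v)$ equals precisely the deviation of $\overrightarrow{G}$ from the $\mathcal{P}$-target profile. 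Consequently $\overrightarrow{G^{\oplus}}$ is balanced, and since its underlying graph is connected, Eulerian, if and only if $\overrightarrow{G}$ already satisfies $\mathcal{P}$; this recovers the completeness direction noted just before the lemma and pins down what the reduction is really doing.

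The soundness direction then becomes a statement about the minimum number of reorientations that cancel the common imbalance function $b_{\overrightarrow{G^{\oplus}}}$. I would phrase ``reorient edges to reach a prescribed imbalance'' as a unit-capacity integral flow that pushes $\tfrac12 b_{\overrightarrow{G^{\oplus}}}(v)$ units of excess out of each surplus vertex toward each deficit vertex, where reorienting one edge sends one unit along that edge in its current direction. In this language, repairing $\overrightarrow{G}$ may use only the arcs of $\overrightarrow{G}$, whereas Eulerizing $\overrightarrow{G^{\oplus}}$ may use the arcs of $\overrightarrow{G}$ together with those of $\overrightarrow{G^*}$, and reorienting either of the two parallel copies of an edge has the identical effect on the imbalances. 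I would take an optimal Eulerizing set $F$, producing balanced layer-orientations $A$ (near $\overrightarrow{G}$) and $B$ (near $\overrightarrow{G^*}$) with $b_A=-b_B$, and transfer each reorientation lying in the $\overrightarrow{G^*}$-layer back to the corresponding edge of $\overrightarrow{G}$, using that the reversal $\overleftarrow{G^*}$ is itself a $\mathcal{P}$-orientation carrying the target profile.

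The main obstacle is exactly this transfer. A reorientation performed in the $\overrightarrow{G^*}$-layer need not correspond to a legal reorientation of $\overrightarrow{G}$, since the two layers may orient an edge oppositely, and a flow that takes a ``shortcut'' through a $\overrightarrow{G^*}$-arc has no a priori analogue among the arcs of $\overrightarrow{G}$; if such shortcuts could strictly reduce the count below what $\overrightarrow{G}$ alone requires, the inequality would fail. The heart of the argument is to rule this out, equivalently to show that an optimal Eulerizing flow can always be chosen so that each undirected edge is used at most once and only in its $\overrightarrow{G}$-orientation. I expect to prove this by a cut-and-exchange argument: for every vertex subset $S$ the net imbalance that must cross the cut $(S,\overline{S})$ is the same in both problems, and the specific complementary degree profile of $\overrightarrow{G^*}$ forces its crossing arcs to be matched, in the needed direction, by parallel $\overrightarrow{G}$-arcs, so any $\overrightarrow{G^*}$-arc carrying flow can be rerouted onto a $\overrightarrow{G}$-arc without increasing the total. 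Once the optimal Eulerizing set is realized entirely inside the $\overrightarrow{G}$-layer, applying the same reorientations to $\overrightarrow{G}$ drives its imbalance to the constant profile, i.e.\ yields property $\mathcal{P}$, with at most $|F|$ reorientations, which establishes the inequality and hence the lemma.
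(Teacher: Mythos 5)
You correctly reduce the lemma (after the bookkeeping with $2|E|$ edges) to the inequality that Eulerizing $\overrightarrow{G^{\oplus}}$ costs at least as many reorientations as repairing $\overrightarrow{G}$ to $\mathcal{P}$, and you correctly isolate the one dangerous step: a flip of a $\overrightarrow{G^*}$-layer arc whose partner copy in $\overrightarrow{G}$ is oppositely oriented has no legal counterpart in the $\overrightarrow{G}$-layer. But your proposed resolution -- the cut-and-exchange claim that the complementary profile of $\overrightarrow{G^*}$ forces every flow-carrying $\overrightarrow{G^*}$-arc to be matched ``in the needed direction'' by a parallel $\overrightarrow{G}$-arc -- is precisely what fails: the only arc parallel to a $\overrightarrow{G^*}$-arc is the other copy of the same undirected edge, and nothing in the degree profile prevents the two copies from being anti-parallel exactly where the flow must cross. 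Concretely, take $V_L=\{s,a,b\}$, $V_R=\{t,x,y\}$, total degree $4$, with edges $\{s,t\}$, $\{s,x\}$ (twice), $\{s,y\}$, $\{a,t\}$ (twice), $\{a,x\}$, $\{a,y\}$, $\{b,t\}$, $\{b,x\}$, $\{b,y\}$ (twice); orient so that $s$ has out-degree $3$, $t$ has in-degree $3$, the edge $\{s,t\}$ points $t\to s$, and all other vertices are balanced. Sampling the balanced vertex $a$ gives $k_1=k_2=2$, and one checks that $\overrightarrow{G}$ is at distance exactly $2$ from $\mathcal{P}$ (so $\epsilon=2/12=1/6$). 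Yet a perfectly admissible $\overrightarrow{G^*}$ (any balanced orientation of the same multigraph) can orient its copy of $\{s,t\}$ as $s\to t$, and flipping that single layer-2 arc balances every vertex of $\overrightarrow{G^{\oplus}}$: the superimposed graph is then at distance $1$ out of $24$ from Eulerian, i.e.\ $1/24$-far instead of the required $\epsilon/2=1/12$. So the rerouting claim you ``expect to prove'' is false as stated, and with ``any'' admissible $\overrightarrow{G^*}$ the inequality itself can fail; your proof cannot be completed along the sketched lines.

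For comparison, the paper's own proof is exactly the naive one-for-one transfer you (rightly) distrusted: it re-flips each flipped $\overrightarrow{E^*}$-edge and flips the corresponding $\overrightarrow{E}$-edge, tacitly assuming the result is still Eulerian. That assumption holds only when the two copies of an edge are oriented the same way; on an anti-parallel pair the transfer changes the degree sequence (in the example above it converts $t\to s$ into $s\to t$ in the $\overrightarrow{E}$-layer, producing imbalance $+4$ at $s$, while the cheapest $\overrightarrow{E}$-layer-only repair costs $2$, not $1$). So your diagnosis of where the difficulty lies is sharper than the paper's argument, which is silent on this case -- but the gap is genuine in both: your cut-based exchange does not close it, because the net cut imbalance being equal in the two problems says nothing about the \emph{directions} available across the cut. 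Any actual repair would have to either constrain the construction of $\overrightarrow{G^*}$ (e.g.\ make it agree with $\overrightarrow{G}$ edgewise wherever the profile permits, so that anti-parallel pairs are controlled) or accept a weaker farness constant than $\epsilon/2$.
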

\begin{proof}
Let $\overrightarrow{H}$ be an Eulerian graph which is the closest to $\overrightarrow{G^{\cup}}$.
Since the underlying undirected graph for $\overrightarrow{G}$  and $\overrightarrow{G^*}$ are 
exactly the same, there is a one-to-one correspondence between the edges in $\overrightarrow{E}$ 
and $\overrightarrow{E^*}$. Now, look at edges of $\overrightarrow{G^{\cup}}$ that were flipped in order to obtain 
$\overrightarrow{H}$. Suppose a flipped edge (say $e$)  belonging to $\overrightarrow{E^*}$. Then, we can re-flip this edge $e$ and
flip the corresponding edge  in $\overrightarrow{E}$. Thus, we have effectively flipped the same number of edges.
By performing these operations on the flipped edges of $\overrightarrow{E^*}$, we have obtained a new graph
which has the same number of flipped edges as $\overrightarrow{G^{\cup}}$, and all the flipped edges in
$\overrightarrow{G^{\cup}}$ belongs to $\overrightarrow{E}$.

Thus, if the graph $\overrightarrow{G^{\cup}}$ is not ``$\frac{\epsilon}{2}$-far" from being Eulerian, then $\overrightarrow{G}$ is not
 ``$\epsilon$-far" from having property $\mathcal{P'}$, which is a contradiction.
\end{proof}

Now, all we have to test is whether the new graph $\overrightarrow{G^{\cup}}$ is Eulerian or ``$\frac{\epsilon}{2}$-far" from being Eulerian.
Note that every query to $\overrightarrow{G^{\cup}}$ can be simulated by a single query to 
$G$. Thus, we can now use the Eulerian testing algorithm from~\cite{euler}. 
The algorithm is summarized in  Algorithm $1$, and the various bounds on the query complexity that 
can be obtained is summarized in  Table $1$.% (see next page).

%\pagebreak
%---------------------Algorithm-----------------------------------
\begin{algorithm}[H]%\label{algorithm:uniformityTesting}
\SetAlgoLined
\KwData{\emph{Degree}-$\Delta$ Oriented Graph $\overrightarrow{G}=(V,\overrightarrow{E})$}
\KwResult{Whether $\overrightarrow{G}$ has the property $\mathcal{P'}$
  or is ``$\epsilon$-far" from having it.}
%initialization\;
%\While{not at end of this document}{
%read current\;
%Preprocess the underlying graph, $G$, to check if it is bipartite.\\
\eIf{$G$ is non-bipartite}
{
  Test if $\overrightarrow{G}$ is Eulerian  (see \cite{euler}) and give the corresponding output.\\
}
{
  Let $V_L$ and $V_R$ be the bipartition for the graph $G$. \\
  
  Sample a vertex from $V_L$ and query all edges incident to it. Let $d^-(v)=k_1$ and $d^+(v)=k_2$.\\
  
  Construct any bipartite graph $\overrightarrow{G^*} = (V, \overrightarrow{E^*})$ with bipartition $(V_L, V_R)$ such that

  (a) For all $v \in V_{L}$, $d^-_{\overrightarrow{G}^*}(v)=k_2$ and $d^+_{\overrightarrow{G}^*}(v)=k_1$, and for all $v \in V_{R}$,  $d^-_{\overrightarrow{G}^*}(v) = k_1$ and $d^+_{\overrightarrow{G}^*}(v)=k_2$.

  (b) The underlying graph of $\overrightarrow{G^*}$ is exactly same as $G=(V,E)$. \\
  
  Superimpose $\overrightarrow{G^*}$ and the graph $G~ (\text{say~} \overrightarrow{G^{\cup}}=(V, E \cup\overrightarrow{E^*}))$.\\
  Test if $\overrightarrow{G^{\cup}}$ is Eulerian (see \cite{euler}) and give the corresponding output.
}
\caption{Algorithm for testing property $\mathcal{P'}$}
\end{algorithm}

\begin{table}
\begin{center}
\begin{tabular}{||l | c | c ||}
    \hline \hline
    &         &  \\
    & \textbf{1-sided test} & \textbf{2-sided test} \\ \hline
    &         &  \\
    \,\ Graphs with large $\Delta$\newline  & $\Delta$ + $O(\frac{m}{\epsilon^2\Delta})$  &  $\Delta$ + $\min\left\{\tilde{O}\left(\frac{m^3}{\epsilon^6\Delta^6}\right),\tilde{O}\left(\frac{m}{\epsilon^2\Delta^{\frac{3}{2}}}\right)\right\}$ \\

    &        &  \\ \hline
    &         &  \\
    \,\ Bounded-degree graphs $^*$ & $\Omega\left({m^{\frac{1}{4}}}\right)$  &  $\Omega \left(\sqrt{\frac{\log m}{\log\log m}}   \right)$\\

    &        &  \\ \hline
    &         &   \\
    \,\ $\alpha$-expander &  $O\left(\frac{\Delta\log \left(\frac{1}{\epsilon}\right)}{\alpha \epsilon}\right)$ &  $min\left\{\tilde{O}\left(\left(\frac{\log\left(\frac{1}{\epsilon}\right)}{\alpha\epsilon}\right)^3\right),\tilde{O}\left(\left(\frac{\sqrt{\Delta}\log\left(\frac{1}{\epsilon}\right)}{\alpha\epsilon}\right)\right)   \right\}$ \\
     &        &  \\
    \hline\hline
\end{tabular}
\end{center}
$^*$ Lower bound holds for $4$-regular graph.\\
$\Delta$ is the maximum degree of the underlying undirected graph and $m$ is the number of edges is the graph. \\ 

\caption{Bounds on the query complexity (in the orientation model) for testing uniformity of stationary distribution of 
        the Markov chain obtained by the random walk on a directed graph.}
\end{table}\label{table:bound}

\section{Conclusion and Open Problems }
We have shown that, for a given  \emph{degree}-$\Delta$ oriented graph $\overrightarrow{G} = (V, \overrightarrow{E})$, 
whether the uniform distribution on vertices of $\overrightarrow{G}$ is a stationary distribution 
of the Markov chain, depends on a local property of graph. If $\overrightarrow{G} $ satisfies 
this local property, then it has some particular kind of structure (see  
Corollary~\ref{lemma:bipartite}). Finally, as an application of this result, we showed that 
testing this local property in orientation model, can be reduced to testing Eulerianity 
 \cite{euler}.

%The result holds only for graphs where the in-degree 
%plus out-degree of all the vertices are  same. It would be interesting to see if one can make a 
%similar statement for more general graphs. 
  It is an interesting problem to test whether the stationary distribution of a Markov 
 chain is equal to some fixed distribution $D$. In this paper, we have considered  $D$ 
 to be uniform distribution, but  the same problem is also interesting for other 
 distributions. Moreover, finding a relationship between the distance of the stationary 
 distribution from the uniform distribution, and the number of edges that needs to be 
 reoriented, is an interesting open problem. Also, this result holds only for graphs where the 
 in-degree plus out-degree of all the vertices are  
same. A major open problem of this work is to come up with a similar statement for more general 
graphs.

%\textbf{\Large{\\Acknowledgements}}\\
%We thank Jaikumar Radhakrishnan and N.~S.~Narayanaswamy for all the useful discussions and suggestions.
\bibliographystyle{plain}
\bibliography{reference}
%\appendix

\end{document}